\definecolor{beamer@blendedblue}{rgb}{0.2,0.2,0.7}
\newtheorem{definition}{Definition}
\newtheorem{proposition}[definition]{Proposition}
\newtheorem{lemma}[definition]{Lemma}
\newtheorem{theorem}[definition]{Theorem}
\mathchardef\ordinarycolon\mathcode`\:
\def\vcentcolon{\mathrel{\mathop\ordinarycolon}}
\DeclareFontFamily{U}{mathx}{\hyphenchar\font45}
\DeclareFontShape{U}{mathx}{m}{n}{<-> mathx10}{}
\DeclareSymbolFont{mathx}{U}{mathx}{m}{n}
\DeclareMathAccent{\widebar}{0}{mathx}{"73}
\newcommand{\ket}[1]{\left\vert{#1}\right\rangle}
\newcommand{\bra}[1]{\left\langle{#1}\right\vert}
\newcommand{\ketbra}[2]{\vert{#1}\rangle\!\langle{#2}\vert}
\newcommand\proj[1]{\vert{#1}\rangle\!\langle{#1}\vert}
\newcommand{\opn}[1]{\operatorname{#1}}
\DeclareMathOperator{\tr}{Tr}  
\newcommand{\1}{\mathbbm{1}}
\newcommand{\ox}{\otimes}
\newcommand{\id}{\operatorname{id}}
\newcommand{\norm}[2]{\ensuremath{\left\lVert#1\right\rVert_{#2}}}%
\newcommand{\spectral}[1]{\ensuremath{\left\lVert#1\right\rVert}}%
\newsavebox{\@brx}
\newcommand{\llangle}[1][]{\savebox{\@brx}{\(\m@th{#1\langle}\)}%
  \mathopen{\copy\@brx\kern-0.5\wd\@brx\usebox{\@brx}}}
\newcommand{\rrangle}[1][]{\savebox{\@brx}{\(\m@th{#1\rangle}\)}%
  \mathclose{\copy\@brx\kern-0.5\wd\@brx\usebox{\@brx}}}
\newcommand*{\cA}{\mathcal{A}}
\newcommand*{\cD}{\mathcal{D}}
\newcommand*{\cM}{\mathcal{M}}
\newcommand*{\cN}{\mathcal{N}}
\newcommand{\bE}{\mathbb{E}}
\newcommand{\bR}{\mathbb{R}}
\begin{document}
\title{\Large\textbf{Mitigating Quantum Errors via Truncated Neumann Series}}

\author{Kun Wang}
\email{wangkun28@baidu.com}
\affiliation{Institute for Quantum Computing, Baidu Research, Beijing 100193, China}

\author{Yu-Ao Chen}
\email{chenyuao@baidu.com}
\affiliation{Institute for Quantum Computing, Baidu Research, Beijing 100193, China}

\author{Xin Wang}
\email{wangxin73@baidu.com}
\affiliation{Institute for Quantum Computing, Baidu Research, Beijing 100193, China}

\begin{abstract}
Quantum gates and measurements on quantum hardware are inevitably subject to hardware
imperfections that lead to quantum errors.
Mitigating such unavoidable errors is crucial to explore the power of quantum hardware better.
In this paper, we propose a unified framework that can mitigate quantum gate and measurement errors in
computing quantum expectation values utilizing the truncated Neumann series.
The essential idea is to cancel the effect of quantum error by approximating its inverse
via linearly combining quantum errors of different orders
produced by sequential applications of the quantum devices with carefully chosen coefficients.
Remarkably, the estimation error decays exponentially in the truncated order,
and the incurred error mitigation overhead is independent of the system size,
as long as the noise resistance of the quantum device is moderate.
We numerically test this framework for different quantum errors and
find that the computation accuracy is substantially improved.
Our framework possesses several vital advantages:
it mitigates quantum gate and measurement errors in a unified manner,
it neither assumes any error structure nor requires the
tomography procedure to completely characterize the quantum
errors, and most importantly, it is scalable.
These advantages empower our quantum error mitigation framework to be efficient and practical
and extend the ability of near-term quantum devices to deliver quantum applications.
\end{abstract}

\maketitle

\textit{\textbf{Introduction.}}---Quantum computers hold great promise for a variety of
scientific and industrial applications~\cite{mcardle2020quantum,cerezo2020variational,bharti2021noisy}.
Nonetheless, the challenges we face are still formidable.
In the current noisy intermediate-scale quantum (NISQ) era~\cite{preskill2018quantum},
quantum computers introduce significant quantum errors that must be dealt with before performing any exhilarating tasks.
Such errors occur either due to unwanted interactions of qubits with the environment
or the physical imperfections of qubit initializations, quantum gates,
and measurements~\cite{kandala2017hardware,arute2019quantum,arute2020hartree,ai2021exponential}.
The problem can be theoretically resolved with
quantum error correction~\cite{shor1995scheme,steane1996error,calderbank1996good,aharonov2008fault,nielsen2010quantum},
which is far beyond the reach of NISQ quantum computers.
This motivates the question of alleviating quantum errors and increasing
the quantum computation accuracy without quantum error correction.

Quantum error mitigation~\cite{temme2017error} provides an inspirational solution to this question
and has been experimentally implemented~\cite{kandala2019error,song2019quantum,zhang2020error,kim2021scalable}.
Typically, errors in a quantum device are classified into quantum gate and measurement errors.
For gate errors, numerous techniques have been designed such as
zero-noise extrapolation~\cite{temme2017error,endo2018practical,dumitrescu2018cloud,he2020resource,giurgica2020digital},
probabilistic error cancellation~\cite{temme2017error,endo2018practical,li2017efficient,takagi2020optimal,jiang2020physical},
subspace expansion~\cite{mcclean2017hybrid,McClean2020Decoding},
virtual distillation~\cite{koczor2021exponential,koczor2021dominant,xiong2021quantum,cai2021resource,huo2021dual,huggins2020virtual},
learning-based method~\cite{czarnik2020error,lowe2020unified,strikis2020learning},
and many others~\cite{mcardle2019error,bonet2018low,endo2021hybrid,sun2021mitigating,
OBrien2020error,cai2021practical,takagi2021fundamental}.
For measurement errors, the focus is not as much as that on gate errors
though measurement errors are significantly larger than gate errors on many quantum platforms.
A well-known strategy is to regard the measurement error as a classical noise model
and handle it via classical post-processing~\cite{chow2012universal,chen2019detector,geller2020rigorous,maciejewski2020mitigation,tannu2019mitigating,nachman2019unfolding,hicks2021readout,bravyi2020mitigating,geller2020efficient,murali2020software,kwon2020hybrid,funcke2020measurement,zheng2020bayesian,maciejewski2021modeling,barron2020measurement,berg2020model,geller2021conditionally,wang2021measurement}.
However, most of the error mitigation techniques require complete characterization of the quantum device
and is not scalable in general. On the other hand, the techniques targeting one type of error
are not directly applicable to the other at large.

We overcome these challenges by proposing a general error mitigation framework
that can reduce both quantum gate and measurement errors in computing the expectation values of quantum observables.
This framework does not require complete characterization of the quantum devices and is theoretically scalable.
The essential idea is to effectively approximate the inverse of quantum error using truncated Neumann series.
Notably, the estimation accuracy of the expectation value is improved in the presence of quantum errors.

\textit{\textbf{Computing the expectation value.}}---A common quantum computation task
is to estimate the expectation value $\tr[O\rho]$ within a specified precision $\varepsilon$
for a given quantum observable $O$ and an $n$-qubit quantum state $\rho:=U\proj{\bm{0}}U^\dagger$
generated by a quantum gate $U$ with initial $n$-qubit state $\ket{\bm{0}}$.
Without loss of generality, we may assume that $O$ is diagonal in the computational basis
and $\norm{O}{2}\leq 1$ where $\norm{\cdot}{2}$ is the matrix $2$-norm.
This task is the building component of multifarious quantum algorithms,
notable practical examples are variational quantum eigensolvers~\cite{peruzzo2014variational,mcclean2016theory},
quantum approximate optimization algorithm~\cite{farhi2014quantum}, and
quantum machine learning~\cite{biamonte2017quantum,havlivcek2019supervised}.

Ideally, $\tr[O\rho]$ can be estimated in the following way. Consider $M$ independent experiments where in each round
we prepare the state $\rho$ using $U$ and measure each qubit in the computational basis as shown in Fig.~\ref{fig:expectation}.
Let $\bm{s}^m\in\{0,1\}^n$ be the outcome in the $m$-th round. Define the empirical mean value
\begin{align}\label{eq:ideal-average}
\eta^{(0)} := \frac{1}{M}\sum_{m=1}^M O(\bm{s}^m)
\end{align}
where $O(\bm{x})$ is the $\bm{x}$-th diagonal element of $O$.
Let $\opn{vec}(\rho)$ be the $2^n$-dimensional column diagonal vector of $\rho$.
Then~\cite{bravyi2020mitigating}
\begin{align}\label{eq:ideal-expectation}
  E^{(0)} := \bE[\eta^{(0)}]
= \sum_{\bm{x}\in\{0,1\}^n}O(\bm{x})\langle \bm{x}\vert\opn{vec}(\rho) = \tr[O\rho],
\end{align}
where $\bE[X]$ is the expectation of the random variable $X$.
Eq.~\eqref{eq:ideal-expectation} implies that $\eta^{(0)}$ is an unbiased estimator of $\tr[O\rho]$.
Furthermore, the standard deviation $\sigma(\eta^{(0)})\leq1/\sqrt{M}$.
By Hoeffding's inequality~\cite{hoeffding1963probability}, $M=2\log(2/\delta)/\varepsilon^2$
would guarantee that $\eta^{(0)}$ approximates $\tr[O\rho]$ within $\varepsilon$ at probability greater than $1 - \delta$,
where $\delta$ is the confidence and all logarithms are in base $2$ throughout this paper.

\begin{figure}
  \centering
  \includegraphics[width=0.45\textwidth]{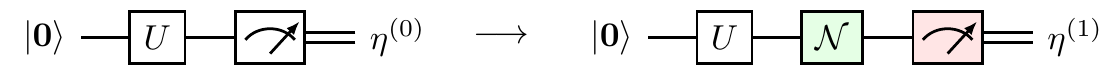}
  \caption{Estimating the expectation value $\tr[O\rho]$ with the ideal quantum devices (left)
        and the noisy quantum devices (right).}
  \label{fig:expectation}
\end{figure}

However, both the quantum gate and measurement of the quantum states suffer from Markovian errors inherent in quantum devices.
For simplicity, we ignore errors in preparing the initial state,
which can be accounted for by regulating noisy quantum gates after qubits initialization.
For the quantum gate error, we assume that the overall evolution is modeled as
the ideal gate evolution $U$ followed by some quantum noisy channel $\cN$~\cite{temme2017error},
i.e., the actual prepared state is $\cN(\rho)$ rather than $\rho$.
The gate error leads to the noisy expectation value $\tr[O\cN(\rho)]$.
Experimentally, the noise channel $\cN$ can be fully characterized via
quantum process or gateset tomography~\cite{greenbaum2015introduction}.
For the measurement error, it was established that such error can be well understood using classical noise
models~\cite{chow2012universal,chen2019detector,geller2020rigorous}.
Specifically, a $n$-qubit noisy measurement device can be characterized
by an error matrix $A$ of size $2^n\times 2^n$.
The element in the $\bm{x}$-th row and $\bm{y}$-th column, $A_{\bm{x}\bm{y}}$,
is the probability of obtaining a outcome $\bm{x}$ provided that the true outcome is $\bm{y}$.
Experimentally, the error matrix can be learnt via calibration~\cite{chen2019detector}.

Suppose now that we adopt the same procedure for computing $\eta^{(0)}$
and obtain the noisy estimator $\eta^{(1)}$ (cf. the right side of Fig.~\ref{fig:expectation}),
where the superscript $1$ indicates that the noisy devices are functioned.
We prove in Appendix~\ref{appx:noisy-expectation} that
\begin{align}\label{eq:noisy-expectation}
E^{(1)} := \bE[\eta^{(1)}] = \sum_{\bm{x}\in\{0,1\}^n}O(\bm{x})\langle \bm{x}\vert A \opn{vec}(\cN(\rho)),
\end{align}
indicating that $\eta^{(1)}$ is no longer an estimator of $\tr[O\rho]$.
Comparing Eqs.~\eqref{eq:ideal-expectation} and~\eqref{eq:noisy-expectation},
we find that in the ideal case, the sampled distribution approximates $\opn{vec}(\rho)$
thanks to the weak law of large numbers,
while in the noisy case, the sampled distribution approximates $A\opn{vec}(\cN(\rho))$
due to both gate and measurement errors, leading to a bias in the estimator.

Our main result is a unified error mitigation framework that can
alleviate the quantum errors identified by $\cN$ and $A$ leading to the noisy estimator $A\opn{vec}(\cN(\rho))$.
In the following, we first describe the general framework and then
specialize it to dealing with quantum gate and measurement errors.

\textit{\textbf{Truncated Neumann series.}}---Let $\bR$ be the real field, $\mathbf{M}_d$ be the
set of $d\times d$ real square matrices and $I\in\mathbf{M}_d$ be the identity matrix.
Let $f:\mathbf{M}_d\to\bR$ be a linear function and $A\in\mathbf{M}_d$ be a matrix.
We are interested in estimating $f(I)$ given access to $f(A)$.
This abstract task encapsulates many important computational tasks in both classical and quantum computing
including the expectation value estimation task described above.
We can show that, under certain conditions, the target $f(I)$ can be efficiently
approximated via linear combination of accessible terms $f(A^k)$ of different orders
with carefully chosen coefficients, employing the Neumann series expansion~\cite[Theorem 4.20]{stewart1998matrix}.
Since $I=AA^{-1}$, the essential idea is to cancel the effect of the inverse $A^{-1}$ via a suitable truncated Neumann series.
The proof is given in Appendix~\ref{appx:Neumann-expansion}.
\begin{proposition}\label{prop:Neumann-expansion}
If $\spectral{I-A}<1$ in some consistent norm, then
\begin{align}\label{eq:Neumann-expansion}
  \left\vert f(I) - \sum_{k=1}^{K+1}c_K(k-1) f\left(A^k\right) \right\vert
= \left\vert f\left((I-A)^{K+1}\right) \right\vert,
\end{align}
where the coefficient function is defined as
\begin{align}\label{eq:c_k}
    c_K(k) := (-1)^{k}\binom{K+1}{k+1},
\end{align}
and $\binom{n}{k}$ is the binomial coefficient.
\end{proposition}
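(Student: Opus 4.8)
The plan is to reduce~\eqref{eq:Neumann-expansion} to a single algebraic matrix identity and then let linearity of $f$ do the rest. Since $f$ is linear, the left-hand side of~\eqref{eq:Neumann-expansion} equals $\bigl\vert f\bigl(I - \sum_{k=1}^{K+1} c_K(k-1) A^k\bigr)\bigr\vert$, so it suffices to prove
\begin{align*}
I - \sum_{k=1}^{K+1} c_K(k-1)\, A^k = (I-A)^{K+1}.
\end{align*}
Applying $f$ to both sides and taking absolute values then reproduces~\eqref{eq:Neumann-expansion} verbatim. Note that the hypothesis $\spectral{I-A}<1$ is not actually needed for this identity, which is a polynomial identity in the single matrix $A$; it enters only through the estimate $\bigl\vert f\bigl((I-A)^{K+1}\bigr)\bigr\vert \le \spectral{f}\,\spectral{I-A}^{K+1}$, which is what makes the truncated series useful and can be recorded as a separate consequence.

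To see where the coefficients $c_K$ come from, I would first run the truncated Neumann argument. Writing $N := I-A$, under $\spectral{N}<1$ we have $A^{-1} = \sum_{j\ge 0} N^j$; truncating at order $K$ and left-multiplying by $A$ yields $\sum_{j=0}^K A N^j$. Using $A N^j = (I-N)N^j = N^j - N^{j+1}$, this sum telescopes to $I - N^{K+1} = I - (I-A)^{K+1}$. Expanding each $A N^j$ by the binomial theorem — legitimate because all powers of the single matrix $A$ commute — collecting the coefficient of $A^k$, and evaluating $\sum_{j=k-1}^{K}\binom{j}{k-1}$ by the hockey-stick identity gives $\binom{K+1}{k}$, which is exactly the $A^k$-coefficient $(-1)^{k-1}\binom{K+1}{k} = c_K(k-1)$.

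For a self-contained proof I would bypass the Neumann series altogether and verify the displayed identity directly: the binomial theorem gives $(I-A)^{K+1} = \sum_{j=0}^{K+1}(-1)^j\binom{K+1}{j}A^j$, hence $I - (I-A)^{K+1} = \sum_{k=1}^{K+1}(-1)^{k-1}\binom{K+1}{k}A^k = \sum_{k=1}^{K+1}c_K(k-1)A^k$, as required. The only steps demanding any care are clerical — the index shift $k=j+1$, the summation ranges, and (in the Neumann route) the hockey-stick evaluation — so I do not anticipate a genuine obstacle; the substance of the proposition is simply the observation that the order-$K$ truncation error is exactly $f\bigl((I-A)^{K+1}\bigr)$.
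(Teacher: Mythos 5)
Your proof is correct, and it takes a genuinely more elementary route than the paper's. The paper starts from the full Neumann series $A^{-1}=\sum_{k=0}^{\infty}(I-A)^{k}$ (which is where the hypothesis $\spectral{I-A}<1$ is actually used), multiplies by $A$, splits off the infinite tail $A\sum_{k=K+1}^{\infty}(I-A)^{k}$, evaluates that tail with the closed-form geometric-series formula to get $(I-A)^{K+1}$, and then converts $\sum_{k=0}^{K}(I-A)^{k}$ into $\sum_{k=0}^{K}c_K(k)A^{k}$ via the binomial recurrence (your hockey-stick computation $\sum_{j=k-1}^{K}\binom{j}{k-1}=\binom{K+1}{k}$ is exactly the step the paper leaves implicit). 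Your self-contained version instead verifies the finite polynomial identity $I-\sum_{k=1}^{K+1}c_K(k-1)A^{k}=(I-A)^{K+1}$ directly from the binomial theorem and then applies linearity of $f$; the index bookkeeping checks out, since $c_K(k-1)=(-1)^{k-1}\binom{K+1}{k}$ matches the coefficient of $A^{k}$ in $I-(I-A)^{K+1}$. What your route buys is the observation--correct, and worth recording--that \eqref{eq:Neumann-expansion} is an unconditional algebraic identity in $A$, with the hypothesis $\spectral{I-A}<1$ needed only for the subsequent step of making the remainder $f\left((I-A)^{K+1}\right)$ small; it also avoids the paper's slightly delicate manipulation of an infinite matrix series and a formal ``division'' by $I-(I-A)$. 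What the paper's route buys is the conceptual motivation: it exhibits the truncated sum as the head of the Neumann expansion of $A^{-1}$, which explains where the scheme comes from. Both arguments are valid; yours is the tighter proof of the stated proposition.
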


Intuitively, Eq.~\eqref{eq:Neumann-expansion} indicates that
one may approximate the target value $f(I)$ using the first $K$
truncated Neumann terms $f(A^k)$, if
1) the precondition $\spectral{I-A}<1$ is satisfied,
2) the $k$-th order value $f(A^k)$ can be obtained in a similar way as that of $f(A)$, and
3) the remaining term $f((I-A)^{K+1})$ can be upper bounded theoretically.
And, even better, if the upper bound decays exponentially with $K$, the approximation quickly converges.
We show that with appropriate representations all these three conditions can be
satisfied in the quantum error mitigation tasks.
This idea has previously been applied for linear data detection
in massive multiuser multiple-input multiple-output wireless systems~\cite{wu2013approximate}.

\textit{\textbf{Gate Error Mitigation (GEM).}}---We illustrate how to use the Neumann series
framework to mitigate gate errors.
For this aim, we first recall the Pauli transfer matrix (PTM) representation in Appendix~\ref{appx:pauli_transfer_matrix_representation}.
In PTM, quantum states $\vert\rho\rrangle$ and observables $\llangle O\vert$ are represented by vectors
and quantum channels $[\cN]$ are represented by real matrices. For $n$ qubits, vectors and matrices are $4^n$-dimensional.
The expected value of the observable $O$ in the state $\rho$ going through the noisy quantum channel $\cN$ reads as follows:
\begin{align}
    \tr[O\cN(\rho)] = \llangle O \vert [\cN] \vert \rho \rrangle.
\end{align}

Setting $f\equiv\tr$ and $A\equiv[\cN]$, the above task fits into the truncated Neumann series framework.
Define the noise resistance of the quantum channel $\cN$ as
\begin{align}\label{eq:noise resistance gem}
   \xi_g(\cN) := \norm{I - [\cN]}{\infty},
\end{align}
where $\norm{\cdot}{\infty}$ is the matrix $\infty$-norm.
We assume that $\xi_g(\cN)<1$, which is a sufficient condition so that Proposition~\ref{prop:Neumann-expansion} holds.
For Pauli noise $\cN$, $\xi_g<1$ corresponds to the case that the
all Pauli eigenvalues must be strictly positive~\cite{harper2020efficient,chen2021quantum},
which can be satisfied whenever the noise is weak~\cite{flammia2020efficient}.
We show in the following theorem that the approximation error (RHS. of Eq.~\eqref{eq:Neumann-expansion})
can be exponentially upper bounded in terms of $\xi_g$.
The proof is given in Appendix~\ref{appx:GEM}.

\begin{theorem}\label{thm:GEM}
Assume that $\xi_g(\cN) < 1$. For arbitrary positive integer $K$, it holds that
\begin{align}\label{eq:GEM}
    \left\vert \tr[O\rho] - \sum_{k=1}^{K+1}c_K(k-1) E_g^{(k)}\right\vert
\leq \norm{\llangle O\vert}{\infty}\xi_g^{K+1},
\end{align}
where $E_g^{(k)} := \llangle O\vert[\cN]^k\vert\rho\rrangle$.
\end{theorem}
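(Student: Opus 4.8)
The plan is to derive Theorem~\ref{thm:GEM} as an essentially immediate corollary of Proposition~\ref{prop:Neumann-expansion}; the only genuine work is to pick the right linear functional and then bound the remainder term by the stated exponentially small quantity.

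First I would instantiate Proposition~\ref{prop:Neumann-expansion} with dimension $d=4^n$, with the matrix $A\equiv[\cN]\in\mathbf{M}_{4^n}$, and with the linear functional $f:\mathbf{M}_{4^n}\to\bR$ defined by $f(X):=\llangle O\vert X\vert\rho\rrangle$ (this is the matrix-level form of the functional ``$f\equiv\tr$'' from the main text). Linearity of $f$ is clear, and by the PTM pairing from Appendix~\ref{appx:pauli_transfer_matrix_representation} we have $f(I)=\llangle O\vert\rho\rrangle=\tr[O\rho]$ and $f([\cN]^k)=\llangle O\vert[\cN]^k\vert\rho\rrangle=E_g^{(k)}$ for every $k\ge1$, so the left-hand side of \eqref{eq:Neumann-expansion} becomes precisely the error to be controlled in \eqref{eq:GEM}. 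The precondition of Proposition~\ref{prop:Neumann-expansion} is met because the matrix $\infty$-norm is an induced (hence consistent, i.e.\ submultiplicative) norm and, by definition \eqref{eq:noise resistance gem}, $\norm{I-[\cN]}{\infty}=\xi_g(\cN)<1$ by hypothesis. Proposition~\ref{prop:Neumann-expansion} then yields
\begin{align}
\left\vert \tr[O\rho] - \sum_{k=1}^{K+1}c_K(k-1)\,E_g^{(k)}\right\vert
= \left\vert \llangle O\vert (I-[\cN])^{K+1}\vert\rho\rrangle\right\vert .
\end{align}

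It then remains to bound the right-hand side. I would regard $\llangle O\vert$ as a $1\times4^n$ matrix and $\vert\rho\rrangle$ as a $4^n\times1$ matrix and apply submultiplicativity of the $\infty$-norm twice, together with $\norm{(I-[\cN])^{K+1}}{\infty}\le\norm{I-[\cN]}{\infty}^{K+1}=\xi_g^{K+1}$, to obtain
\begin{align}
\left\vert \llangle O\vert (I-[\cN])^{K+1}\vert\rho\rrangle\right\vert
\le \norm{\llangle O\vert}{\infty}\,\xi_g^{K+1}\,\norm{\vert\rho\rrangle}{\infty}.
\end{align}
The last missing ingredient is $\norm{\vert\rho\rrangle}{\infty}\le1$, which holds for any density matrix in the PTM conventions of Appendix~\ref{appx:pauli_transfer_matrix_representation}: each entry of $\vert\rho\rrangle$ is (a normalization-dependent multiple of) $\tr[P_i\rho]$ for a Pauli operator $P_i$ with operator norm at most one, so $\vert\tr[P_i\rho]\vert\le1$ since $\rho$ is a state. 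Substituting this into the previous display gives \eqref{eq:GEM}.

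I do not expect a real obstacle here: all the mathematical substance sits in Proposition~\ref{prop:Neumann-expansion}. The two points that require care are (i) choosing $f$ so that $f(I)$ is literally $\tr[O\rho]$ rather than an $\cN$-dressed surrogate, and (ii) making the norm chain above airtight — in particular, confirming that the $\infty$-norm defining $\xi_g$ is submultiplicative across the rectangular products $\llangle O\vert\cdot(I-[\cN])^{K+1}\cdot\vert\rho\rrangle$, and that $\norm{\vert\rho\rrangle}{\infty}\le1$ holds in whatever normalization of the Pauli transfer matrix formalism Appendix~\ref{appx:pauli_transfer_matrix_representation} fixes.
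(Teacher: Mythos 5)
Your proposal is correct and follows essentially the same route as the paper: reduce via Proposition~\ref{prop:Neumann-expansion} with $f(X)=\llangle O\vert X\vert\rho\rrangle$, then bound the remainder $\left\vert\llangle O\vert(I-[\cN])^{K+1}\vert\rho\rrangle\right\vert$ by $\norm{\llangle O\vert}{\infty}\norm{I-[\cN]}{\infty}^{K+1}$ using that $\vert\tr[P_{\bm{i}}\rho]\vert\leq 1$ for every Pauli (the paper's Lemma~\ref{eq:rho-bound}). The only difference is presentational: the paper carries out the rectangular-product norm bound by hand, inserting Pauli resolutions of identity and summing componentwise, whereas you invoke submultiplicativity of the induced $\infty$-norm directly, which is a legitimate shortcut since the explicit expansion is exactly the proof of that fact.
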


As evident from Theorem~\ref{thm:GEM}, the noise resistance $\xi_g$ of $\cN$
uniquely determines the number of terms required in the truncated Neumann series
to approximate $\tr[O\rho]$ to the desired precision.
What is more, since $\xi_g<1$, the approximation error decays exponentially in terms of $K$,
indicating that small $K$ suffices to reach high estimating accuracy.
Thanks to the multiplicativity property of PTM, which states that the PTM of $\cN^{\circ k}$ is exactly $[\cN]^{k}$,
each $E_g^{(k)}$ can be viewed as the noisy expectation value generated by the noisy gate device executed $k$ times sequentially.
Since measurement errors can be handled independently, we do not concern such errors in GEM.
Let $\overline{E}_g:=\sum_{k=1}^{K+1}c_K(k-1) E_g^{(k)}$.
Theorem~\ref{thm:GEM} inspires a systematic way to estimate the expectation value $\tr[O\rho]$.
Firstly, we choose $K$ so that the RHS. of~\eqref{eq:GEM} evaluates to the desired precision $\varepsilon$, yielding
the optimal gate truncated number
\begin{align}\label{eq:K-opt-gate}
    K_g = \left\lceil\frac{\log\varepsilon - \log\norm{\llangle O\vert}{\infty}}{\log\xi_g} - 1\right\rceil.
\end{align}
Secondly, we compute $\overline{E}_g$ by estimating each $E_g^{(k)}$ and linearly combining them with coefficients $c_K$.
Since $\overline{E}_g$ itself is only an $\varepsilon$-estimate
of $\tr[O\rho]$, it suffices to
approximate $\overline{E}$ within an error $\varepsilon$.
Motivated by the relation between $\eta^{(1)}$ and $E^{(1)}$ in~\eqref{eq:noisy-expectation},
we propose the following procedure to estimate $E_g^{(k)}$ for arbitrary $1\leq k\leq K+1$:
\begin{enumerate}
  \item Generate a quantum state $\rho$.
  \item Execute the channel $\cN$ \emph{sequentially} $k$ times, yielding the final
        state $\cN^{\circ k}(\rho)$. Measure the final state and collect the measurement outcome.
  \item Repeat the above two steps $M$ rounds.
  \item Output the average $\eta_g^{(k)}$ as an estimate of $E_g^{(k)}$.
\end{enumerate}
We claim that the average $\overline{\eta}_g:=\sum_{k=1}^{K+1}c_K(k-1) \eta_g^{(k)}$
approximates $\tr[O\rho]$ within error $2\varepsilon$ with high probability, i.e.,
\begin{align}\label{eq:good-estimate}
    \opn{Pr}\left\{\vert\tr[O\rho] - \overline{\eta}_g\vert\leq 2\varepsilon\right\} \geq 1 - \delta.
\end{align}
The proof is given in Appendix~\ref{appx:good-estimate}. For illustrative purpose,
we demonstrate in Fig.~\ref{fig:expectation-k3} the experimental setup for estimating
the noisy expectation value $E_g^{(3)}$,
where the noisy device is repeated sequentially three times in each round.

\begin{figure}[!htbp]
  \centering
  \includegraphics[width=0.4\textwidth]{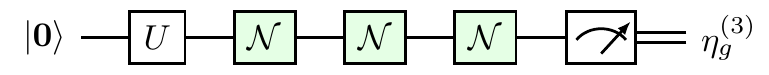}
  \caption{Experimental setup for estimating $E_g^{(3)}$,
        in which the noisy gate device (box in blue) is executed $3$ times sequentially.}
  \label{fig:expectation-k3}
\end{figure}

We emphasize that the GEM method does not rely on the complete characterization of $\cN$,
at least in theory, since the only relevant quantity is the noise resistance $\xi_g$,
which we believe can be estimated efficiently without tomography.
Furthermore, the repetition of $\cN$ can be experimentally accomplished by inserting
identity gates after the noisy gate under the assumption that all gates incur the same noise.

\textit{\textbf{Measurement Error Mitigation (MEM).}}---Recall that the error of a measurement device
is well understood using classical noise models and is characterized by an error matrix $A$.
It is straightforward to classically reverse the noise effects
by multiplying the sampled distribution by the inversion $A^{-1}$.
Nonetheless, there are several limitations of the direct inverse approach:
(i) Completely characterizing $A$ requires $2^n$ calibration setups and is not scalable;
(ii) $A$ may be singular which prevents direct inversion; and
(iii) $A^{-1}$ is hard to compute in general and might not be column stochastic, resulting unphysical estimates.

We manifest how to use the Neumann series framework to mitigate measurement errors while avoiding the limitations.
Similar to the GEM case, the essential idea of MEM is to effectively simulate the inverse of the error matrix $A$,
utilizing the truncated Neumann series.
First of all, we define the noise resistance of the error matrix $A$ as
\begin{align}\label{eq:noise resistance mem}
   \xi_m(A) := 2\left(1 - \min_{\bm{x}\in\{0,1\}^n}\bra{\bm{x}}A\ket{\bm{x}}\right).
\end{align}
By definition, $1-\xi_m/2$ is the minimal diagonal element of $A$.
Intuitively, $\xi_m/2$ characterizes the measurement device's worst-case behavior
since it is the maximal probability for which the true and actual outcomes mismatch.
In the following, we assume $\xi_m<1$, which is equivalent to the condition that
the minimal diagonal element of $A$ is larger than $0.5$.
This assumption is reasonable since otherwise
the measurement device is too noisy to be applicable from the practical perspective.
It is also a sufficient condition under which Proposition~\ref{prop:Neumann-expansion} holds.
We deploy the truncated Neumann series framework to alleviate the measurement errors and obtain the following,
in a similar favor of Theorem~\ref{thm:GEM}. The proof is given in Appendix~\ref{appx:MEM}.

\begin{theorem}\label{thm:MEM}
Assume that $\xi_m<1$. For arbitrary positive integer $K$, it holds that
\begin{align}\label{eq:MEM}
    \left\vert \tr[O\rho] - \sum_{k=1}^{K+1}c_K(k-1) E_m^{(k)}\right\vert \leq \xi_m^{K+1},
\end{align}
where
\begin{align}\label{eq:E_k}
    E_m^{(k)} := \sum_{\bm{x}\in\{0,1\}^n}O(\bm{x})\langle \bm{x}\vert A^{k}\opn{vec}(\rho).
\end{align}
\end{theorem}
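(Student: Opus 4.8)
The plan is to specialize Proposition~\ref{prop:Neumann-expansion} to dimension $d=2^n$, with the matrix taken to be the measurement error matrix $A$ and with the linear functional $f:\mathbf{M}_{2^n}\to\bR$ given by $f(B):=\sum_{\bm{x}\in\{0,1\}^n}O(\bm{x})\langle\bm{x}\vert B\opn{vec}(\rho)$, equivalently $f(B)=\bm{o}^{\mathsf{T}}B\opn{vec}(\rho)$ where $\bm{o}$ denotes the column vector whose $\bm{x}$-th entry is the diagonal element $O(\bm{x})$. First I would check that this $f$ reproduces the quantities in the statement: by construction $f(A^k)=E_m^{(k)}$, while at the identity $f(I)=\sum_{\bm{x}}O(\bm{x})\rho_{\bm{x}\bm{x}}=\tr[O\rho]$, in agreement with Eq.~\eqref{eq:ideal-expectation}. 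Consequently the left-hand side of~\eqref{eq:MEM} is exactly $\vert f(I)-\sum_{k=1}^{K+1}c_K(k-1)f(A^k)\vert$, which by~\eqref{eq:Neumann-expansion} equals $\vert f((I-A)^{K+1})\vert$, provided we can exhibit a consistent norm in which $\spectral{I-A}<1$.

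The second step is to choose that norm to be the induced matrix $1$-norm, namely the maximum absolute column sum. The key point is that $A$ is column stochastic: its entries are conditional probabilities, hence nonnegative with $\sum_{\bm{x}}A_{\bm{x}\bm{y}}=1$ for every $\bm{y}$. Therefore the absolute sum of the $\bm{y}$-th column of $I-A$ equals $(1-A_{\bm{y}\bm{y}})+\sum_{\bm{x}\neq\bm{y}}A_{\bm{x}\bm{y}}=2(1-A_{\bm{y}\bm{y}})$, and taking the maximum over $\bm{y}$ gives
\begin{align}
\norm{I-A}{1}=2\Bigl(1-\min_{\bm{y}}A_{\bm{y}\bm{y}}\Bigr)=\xi_m<1
\end{align}
by hypothesis. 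Since the induced $1$-norm is submultiplicative and compatible with the vector $1$-norm, the precondition of Proposition~\ref{prop:Neumann-expansion} is met.

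It then remains to bound $\vert f((I-A)^{K+1})\vert=\bigl\vert\bm{o}^{\mathsf{T}}(I-A)^{K+1}\opn{vec}(\rho)\bigr\vert$. Applying H\"{o}lder's inequality and then compatibility and submultiplicativity of the $1$-norm,
\begin{align}
\bigl\vert\bm{o}^{\mathsf{T}}(I-A)^{K+1}\opn{vec}(\rho)\bigr\vert
&\leq\norm{\bm{o}}{\infty}\,\bigl\lVert(I-A)^{K+1}\opn{vec}(\rho)\bigr\rVert_{1}\nonumber\\
&\leq\norm{\bm{o}}{\infty}\,\norm{I-A}{1}^{K+1}\,\bigl\lVert\opn{vec}(\rho)\bigr\rVert_{1}.
\end{align}
Since $O$ is diagonal with $\norm{O}{2}\leq1$ we have $\norm{\bm{o}}{\infty}=\max_{\bm{x}}\vert O(\bm{x})\vert\leq1$; since $\opn{vec}(\rho)$ is a probability vector, $\lVert\opn{vec}(\rho)\rVert_{1}=\sum_{\bm{x}}\rho_{\bm{x}\bm{x}}=1$; and $\norm{I-A}{1}=\xi_m$ by the previous step. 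Hence the right-hand side is at most $\xi_m^{K+1}$, which is precisely the claimed bound.

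The step I expect to demand the most care---the main obstacle---is the identity $\norm{I-A}{1}=\xi_m$, since this is where column stochasticity of $A$ genuinely enters and it dictates the choice of the $1$-norm here, in contrast to the $\infty$-norm used for GEM, which is adapted to the PTM structure of trace-preserving channels. The remaining ingredients are a direct invocation of Proposition~\ref{prop:Neumann-expansion} together with elementary norm inequalities.
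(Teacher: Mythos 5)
Your proposal is correct and follows essentially the same route as the paper: invoke Proposition~\ref{prop:Neumann-expansion} with the linear functional $f(B)=\sum_{\bm{x}}O(\bm{x})\langle\bm{x}\vert B\opn{vec}(\rho)$, bound the remainder via the induced matrix $1$-norm using $\vert O(\bm{x})\vert\leq 1$ and $\sum_{\bm{y}}\rho(\bm{y})=1$, and close with the identity $\norm{I-A}{1}=\xi_m$ from column stochasticity, which is exactly the paper's Lemma~\ref{lemma:min-norm}. The only difference is presentational: you package the paper's explicit double-sum estimate as H\"older's inequality plus norm compatibility, which is the same bound.
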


Upper bounding the RHS. of~\eqref{eq:MEM} with the desired precision $\varepsilon$
yields the optimal measurement truncated number
\begin{align}\label{eq:K-opt-mem}
    K_m = \left\lceil\frac{\log\varepsilon}{\log\xi_m} - 1\right\rceil.
\end{align}
We can evaluate the noisy values $E_m^{(k)}$ up to the optimal order $K_m$ in almost the same manner
as we estimated $E_g^{(k)}$ in GEM. The different step is that we replace the single measurement
with sequential measurements so that $A^k$ appeared in~\eqref{eq:E_k} can be recovered:
\begin{enumerate}
  \item Generate a quantum state $\rho$.
  \item Using $\rho$ as input, execute the noisy measurement device $k$ times
        \emph{sequentially} and collect the outcome produced by the final $k$-th measurement device.
  \item Repeat the above two steps $M$ rounds.
  \item Output the average $\eta_m^{(k)}$ as an estimate of $E_m^{(k)}$.
\end{enumerate}
Likewise, the average $\overline{\eta}_m:=\sum_{k=1}^{K+1}c_K(k-1) \eta_m^{(k)}$
approximates $\tr[O\rho]$ within error $2\varepsilon$ with probability larger than $1 - \delta$.
The proof is the same as that of Eq.~\eqref{eq:good-estimate}.

A crucial concept we introduce in the above MEM method is the sequential measurement.
Roughly speaking, it means that we use the output of one measurement device as the input of the other.
In Appendix~\ref{appx:sequential measurements}, we elaborate thoroughly this concept and
show that the classical noise model describing the
sequential measurement repeating $k$ times is indeed characterized by the error matrix $A^k$.
For illustrative purpose, we demonstrate in Fig.~\ref{fig:expectation-k3-mem}
the experimental setup for estimating $E_m^{(3)}$,
where the measurement device is executed three times sequentially.
Indeed, one can think of the rightmost $k-1$ measurements as
implementing the calibration subroutine since they always have
the computational basis states as inputs. In some sense, this is \emph{dynamic} calibration
where we do not statically enumerate all computational bases as input states
but dynamically prepare the input states based on the output information
of the target state from the first measurement device.

\begin{figure}[!htbp]
  \centering
  \includegraphics[width=0.4\textwidth]{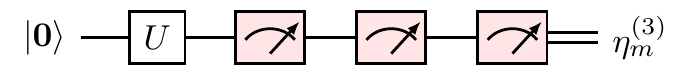}
  \caption{Experimental setup for estimating $E_m^{(3)}$,
        in which the noisy measurement device (box in red) is executed $3$ times sequentially.}
  \label{fig:expectation-k3-mem}
\end{figure}

\textit{\textbf{Resource analysis.}}---When applying the truncated Neumann series framework to mitigate quantum errors,
we repeat the quantum devices sequentially in different numbers of times, compute the noisy values,
and linearly combine them to approximate the target. We use the number of quantum states consumed as the resource metric
and analyze the complexity of the proposed methods GEM and MEM.
Fundamentally, the resource costs of both methods are dominated by the
the optimal truncated number $K$---$K_g$~\eqref{eq:K-opt-gate} in GEM
and $K_m$~\eqref{eq:K-opt-mem} in MEM---that determines the maximal number of truncated terms
and the number of quantum states prepared to evaluate each truncated term.
The detailed analysis has been given in Appendix~\ref{appx:good-estimate}.
Set $\Delta:=\binom{2K+2}{K+1} - 1$.
For each $1\leq k\leq K$, we need $M=2(K+1)\Delta\log(2/\delta)/\varepsilon^2$ copies of quantum states
to achieve the desired accuracy $\varepsilon$ and confidence $\delta$.
As so, the total number of quantum states consumed is roughly given by
\begin{align}
    M(K+1)
&= 2(K+1)^2\Delta\log(2/\delta)/\varepsilon^2 \nonumber \\
&\approx 4^{K}\log(2/\delta)/\varepsilon^2,\label{eq:no-of-states}
\end{align}
where the approximation follows from the Stirling's approximation.
In other words, the number of quantum states consumed by GEM or MEM is much
more than that of the ideal case by a factor of $4^{K}$.
We shall call the factor $4^K$ the error mitigation overhead with truncated Neumann series,
characterizing the overall increased number of samples necessary to ensure a certain accuracy.
At first glance, the exponential factor $4^{K}$ renders the error mitigation methods utilizing
truncated Neumann series infeasible when $K$ becomes large. However, we argue in the following
that for near-term quantum devices with moderate noise resistance, $K$ is quite small
and thus is acceptable experimentally. In Fig.~\ref{fig:K_vs_xi},
we plot the optimal truncated number $K=\lceil\log\varepsilon/\log\xi-1\rceil$
as a function of the noise resistance $\xi$, where the error tolerance parameter is fixed as $\varepsilon=0.01$.
Notice that $\xi$ can be either $\xi_g$~\eqref{eq:noise resistance gem} in GEM (ignoring negligible extra terms)
or $\xi_m$~\eqref{eq:noise resistance mem} in MEM.
One can check from the figure that $K\leq 10$ whenever the noise resistance satisfies $\xi\leq0.657$.
In GEM with Pauli noise, this corresponds to that the minimal Pauli eigenvalue is larger than $0.343$.
In MEM, this corresponds to that the minimal diagonal element of $A$ is larger than $0.67$.
These are mild requirements that many public quantum hardwares fulfill~\cite{chen2019detector,kandala2019error}.
Generally speaking, the incurred error mitigation overhead $4^K$ can be independent of the
system size (the number of qubits), so long as the noise resistance $\xi$ of the quantum device is moderate,
in the sense that it is below a certain threshold (say $0.657$).

\begin{figure}
  \centering
  \includegraphics[width=0.48\textwidth]{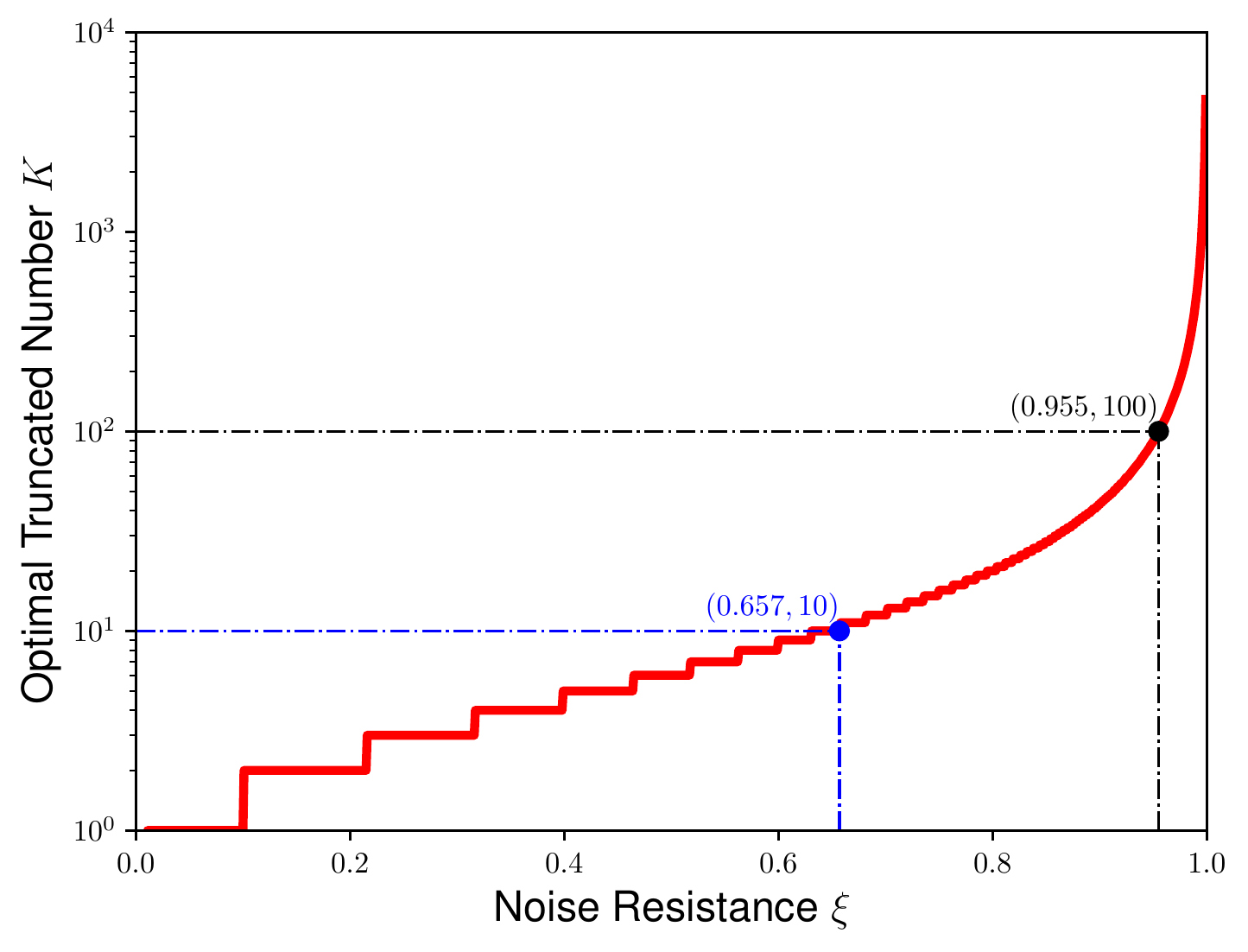}
  \caption{The (simplified) truncated number $K=\lceil\log\varepsilon/\log\xi-1\rceil$
          as a function of the noise resistance $\xi$, where $\varepsilon=0.01$.}
  \label{fig:K_vs_xi}
\end{figure}

\textit{\textbf{Numerical results.}}---We use the qubit depolarizing channel
as example to verify the GEM method, and more examples can be found in Appendix~\ref{appx:examples}.
This channel is defined as $\Omega_p(\rho) := (1-p)\rho + pI/2$, where $p\in[0,1]$.
Consider the task where the ideal state is $\rho = \proj{0}$ and $O=Z$.
The ideal expectation value is $\tr[Z\rho] = 1$ while
the noisy expectation value suffering from depolarizing noise is
$\tr[Z\Omega_p(\rho)] = 1 - p$. To apply GEM, we first compute the optimal truncated number.
Since $\xi_g(\Omega_p) = p$ and $\norm{\llangle Z\vert}{\infty} = 1$,
we get $K_g = \lceil \log\varepsilon/\log p - 1\rceil$.
The mitigated expectation values for different noise level $p$ are
shown in Fig.~\ref{fig:depolarizing}, where $\varepsilon = 0.01$ is fixed.
From the numerical result, we can see that GEM works well and substantially improve the computation accuracy.

\begin{figure}[!htbp]
  \centering
  \includegraphics[width=0.45\textwidth]{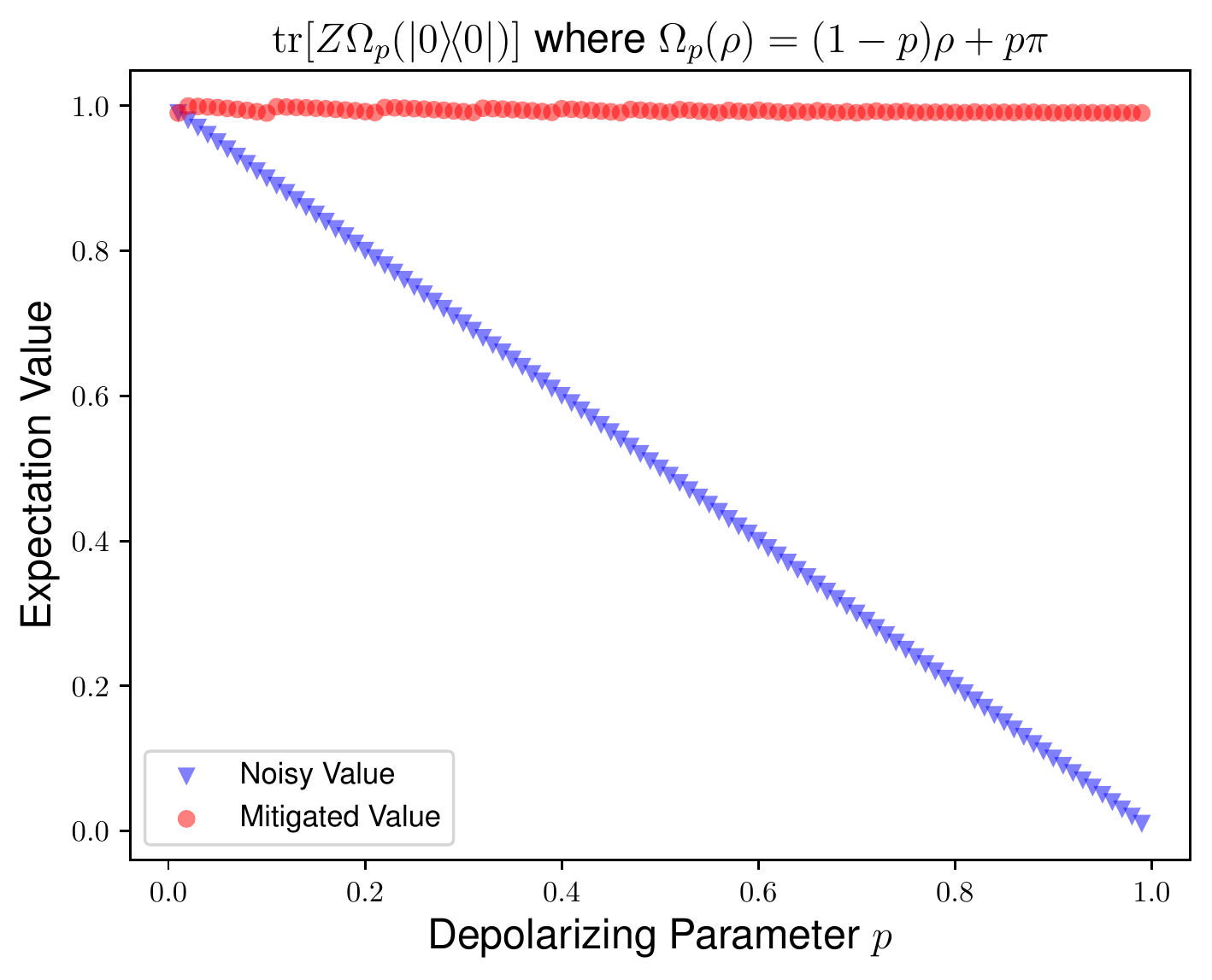}
  \caption{Quantum gate error mitigation via truncated Neumann series for the qubit depolarizing channel,
          where $\varepsilon=0.01$.}
  \label{fig:depolarizing}
\end{figure}

We consider another example to verify the MEM method.
Consider the input state $\rho = \proj{\Phi}$, where $\Phi$ is the maximal superposition state
$\ket{\Phi} := \sum_{i=0}^{2^n-1}\ket{i}/\sqrt{2^n}$.
The observable $O$ is a tensor product of Pauli $Z$ operators, i.e., $O=Z^{\otimes n}$.
The ideal expectation value is $\tr[O\rho] = 0$. We choose $n=8$ and
randomly generate an error matrix $A$ whose noise resistance satisfies
$\xi(A) \approx 0.657$ so that the measurement error fall into the moderate regime.
We repeatedly estimate the noisy expectation values $\eta^{(1)}$ and
compute the the average in a total number of $1000$ times.
Note that all these experiments assume the same error matrix $A$,
and the parameters are fixed as $\varepsilon=\delta=0.01$.
The obtained expectation values are scatted in Fig.~\ref{fig:error-mitigation-result}.
It is easy to see that the noisy measurement device, characterized by the error matrix $A$,
incurs a bias $-0.007$ to the estimated expectation values.
On the other hand, the mitigated expectation values using the MEM method
distributed evenly around the ideal value $0$ within a distance of $0.01$ with high probability.
As evident from Fig.~\ref{fig:error-mitigation-result}, few mitigated values fall outside the expected region.
These statistical outcomes match our conclusions,
validating the correctness and performance of the truncated Neumann series based MEM.

\begin{figure}[!htbp]
  \centering
  \includegraphics[width=0.48\textwidth]{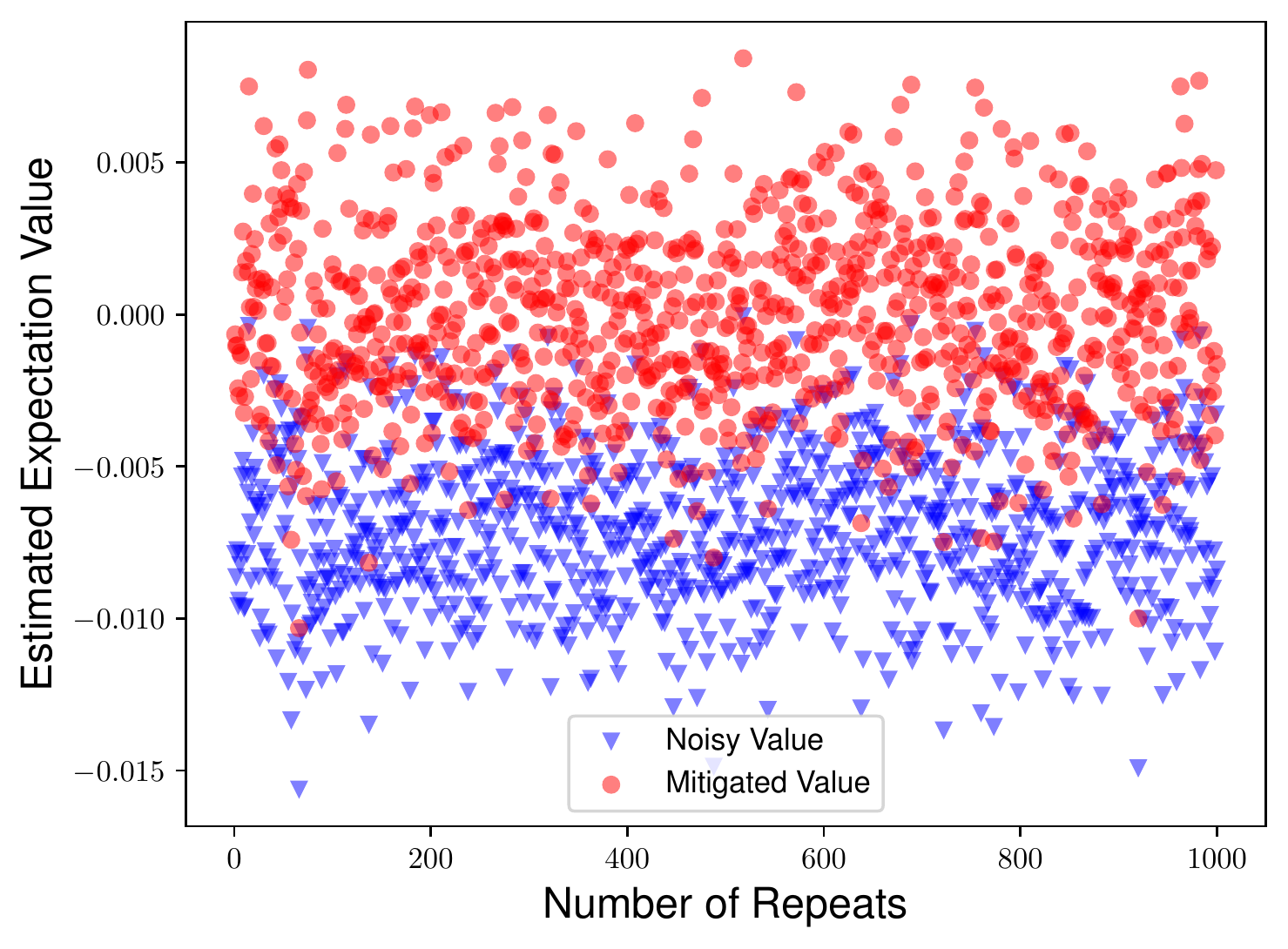}
  \caption{$1000$ noisy estimates $\eta^{(1)}$ (blue triangles)
        and mitigated estimates $\overline{\eta}$ (red dots) via truncated Neumann series
        for the ideal expectation value $\tr[O\rho] = 0$.
        Here, the number of qubits is $8$.}
  \label{fig:error-mitigation-result}
\end{figure}

\textit{\textbf{Conclusions.}}---We introduced a general framework to mitigate quantum gate and measurement errors in
computing expectation values of quantum observables, an essential building block of numerous quantum algorithms.
The idea behind this method is to approximate the inverse of the quantum error characterizing
the noisy behavior of the underlying quantum device using a small number truncated Neumann series terms.
Remarkably, the estimation error decays exponentially in the truncated order, and the incurred error mitigation
overhead is independent of the system size, as long as the noise resistance of the quantum device is moderate.
The proposed error mitigation framework theoretically works for any quantum error and
does not require the tomography procedure to completely characterize the quantum errors.
This property is beneficial and will be more and more important as the quantum circuit
sizes increase. We numerically tested this method for both gate and measurement errors
and found that the computation accuracy is substantially improved.
We believe that this framework will be helpful for quantum error mitigation in NISQ quantum devices.
We emphasize that quantum error mitigation is still an active research area.
For example, there is emerging interest in syncretizing error mitigation and
correction techniques~\cite{piveteau2021error,lostaglio2021error}.
It would be interesting to explore how the proposed error mitigation
framework can be enhanced via error correction.

\textit{\textbf{Acknowledgements.}}---We thank Runyao Duan for his helpful suggestions.




%


\setcounter{secnumdepth}{2}
\appendix
\widetext

\section{Proof of Eq.~\eqref{eq:noisy-expectation}}\label{appx:noisy-expectation}

\begin{proof}
By the definition of $\eta^{(1)}$, we have
\begin{align}
  \eta^{(1)}
&= \frac{1}{M}\sum_{m=1}^MO(\bm{s}^m) \\
&= \frac{1}{M}\sum_{m=1}^M\sum_{\bm{x}\in\{0,1\}^n}O(\bm{x})\langle\bm{x}\vert\bm{s}^m\rangle \\
&= \sum_{\bm{x}\in\{0,1\}^n}O(\bm{x})\langle \bm{x}\vert\left(\frac{1}{M}\sum_{m=1}^M\vert \bm{s}^m\rangle\right).
\end{align}
The expectation value can be evaluated as
\begin{align}
E^{(1)} := \bE[\eta^{(1)}]
&= \bE\left[\sum_{\bm{x}\in\{0,1\}^n}O(\bm{x})\langle \bm{x}\vert
  \left(\frac{1}{M}\sum_{m=1}^M\vert \bm{s}^m\rangle\right)\right] \\
&= \sum_{\bm{x}\in\{0,1\}^n}O(\bm{x})\langle \bm{x}\vert\bE\left[\frac{1}{M}\sum_{m=1}^M\vert \bm{s}^m\rangle\right] \\
&= \sum_{\bm{x}\in\{0,1\}^n}O(\bm{x})\langle \bm{x}\vert A\opn{vec}(\cN(\rho)).
\end{align}
\end{proof}

\section{Proof of Proposition~\ref{prop:Neumann-expansion}}\label{appx:Neumann-expansion}

\begin{proof}
We first recall the definition of Neumann series~\cite[Theorem 4.20]{stewart1998matrix}.
Let $A\in\mathbf{M}_d$ be a real matrix. If $\lim_{k\to\infty}(I-A)^k=0$, then
\begin{align}\label{eq:ShFAGgUTHbU}
    A^{-1} = \sum_{k=0}^\infty (I-A)^k.
\end{align}
Furthermore, a sufficient condition for $\lim_{k\to\infty}(I-A)^k=0$ is that $\spectral{I-A}<1$ in some consistent norm.

Multiplying both sides of Eq.~\eqref{eq:ShFAGgUTHbU} with $A$ and rearranging the elements, we have
\begin{align}\label{eq:ShFAGgUTHbU-2}
I = AA^{-1} = A\left(\sum_{k=0}^K (I-A)^k\right) + A\left(\sum_{k=K+1}^\infty (I-A)^k\right).
\end{align}
The above yields
\begin{align}
  A\left(\sum_{k=K+1}^\infty (I-A)^k\right)
&= I - A\left(\sum_{k=0}^K (I-A)^k\right) \\
&= I - \frac{(I-A)^0 \times [I - (I-A)^{K+1}]}{I - (I - A)}A \\
&= (I-A)^{K+1},
\end{align}
where the second equality follows from the closed-form formula of geometric series.
On the other hand, using the recurrence relation for binomial coefficients we can show that
\begin{align}\label{eq:ShFAGgUTHbU-3}
  \sum_{k=0}^K (I-A)^k = \sum_{k=0}^K \binom{K+1}{k+1}(-A)^k = \sum_{k=0}^K c_K(k) A^k,
\end{align}
where the coefficient function $c_K$ is defined in Eq.~\eqref{eq:c_k}.
Putting all pieces together, we get
\begin{align}
    I = \sum_{k=1}^{K+1} c_K(k-1) A^k + (I-A)^{K+1},
\end{align}
where we change the variable so that the summation begins with $k=1$.
Thus, using the linearity of function $f$ we conclude that
\begin{align}
    f(I) = \sum_{k=1}^{K+1} c_K(k-1) f\left(A^k\right) + f\left((I-A)^{K+1}\right).
\end{align}
\end{proof}

\section{Pauli transfer matrix representation} 
\label{appx:pauli_transfer_matrix_representation}

The four (normalized) Pauli operators in the qubit space are defined as
\begin{subequations}\label{eq:Pauli-qubit}
\begin{alignat}{3}
  I \equiv \sigma_0 &:= \frac{1}{\sqrt{2}}\begin{pmatrix} 1 & 0 \\ 0 & 1\end{pmatrix},
   && \quad &
  X \equiv \sigma_1 &:= \frac{1}{\sqrt{2}}\begin{pmatrix} 0 & 1 \\ 1 & 0\end{pmatrix}, \\
  Y \equiv \sigma_2 &:= \frac{1}{\sqrt{2}}\begin{pmatrix} 0 & -i \\ i & 0\end{pmatrix},
   && \quad &
  Z \equiv \sigma_3 &:= \frac{1}{\sqrt{2}}\begin{pmatrix} 1 & 0 \\ 0 & -1\end{pmatrix}.
\end{alignat}
\end{subequations}
They provide an orthonormal basis for the qubit linear operators, i.e.,
arbitrary qubit linear operator can be decomposed with respect to this basis.
For the $n$-qubit case, one can construct a set of Pauli operators, which we call the Pauli set, as
\begin{align}\label{eq:Pauli-set}
\bm{P}_n := \left\{\bigotimes_{k=1}^n\sigma_{j_k} : j_k = 0,1,2,3\right\}
            \equiv \left\{I, X, Y, Z\right\}^{\ox n}.
\end{align}
It is easy to verify that $\vert\bm{P}_n\vert=4^n$, where $\vert\cdot\vert$ denotes the size of the set.

The Pauli transfer matrix (PTM) representation expresses states and evolutions in terms of the
Pauli basis $\bm{P}_n$~\eqref{eq:Pauli-set}, since $\bm{P}_n$ forms a basis for the $n$-qubit operators.
Specially, an $n$-qubit quantum state $\rho$ can be written in the vector form by
decomposing it into the Pauli basis:
\begin{align}
  \vert\rho\rrangle := \begin{bmatrix} \vdots \\ \rho_{\bm{i}} \\ \vdots \end{bmatrix},
\end{align}
where the vector element is $\rho_{\bm{i}} := \tr[P_{\bm{i}}\rho]\in\mathbb{R}$ and $P_{\bm{i}}\in\bm{P}_n$.
Since $\vert\bm{P}_n\vert = 4^n$, $\vert\rho\rrangle$ is a $4^n$-dimensional real column vector.
In this representation, a basis for the vector space is
\begin{align}
\left\{\vert P_{\bm{i}} \rrangle: \bm{i} \in \{0,1,2,3\}^n \right\}.
\end{align}
Similarly, a Hermitian observable $O$ can also be expressed as a $4^n$-dimensional real row vector via
\begin{align}
  \llangle O \vert := [\cdots~~ O_{\bm{i}}~~ \cdots],
\end{align}
where the vector element is $O_{\bm{i}}:=\tr[OP_{\bm{i}}]$ and $P_{\bm{i}}\in\bm{P}_n$.
That is to say, $O$ can be written as a linear combination of Pauli operators $O=\sum_{\bm{i}}O_{\bm{i}}P_{\bm{i}}$.
By definition, one has $\norm{\llangle O\vert}{\infty} = \sum_{\bm{i}}\vert O_{\bm{i}}\vert$,
where $\norm{\cdot}{\infty}$ is the matrix $\infty$-norm.
In quantum computation, a common choice of the observable is $O=Z^{\ox n}$.
In this case, one has $\norm{\llangle Z^{\ox n}\vert}{\infty}=1$.

A quantum channel $\cN$ (with $n$-qubit input and output systems) can be expressed as a
$4^n\times 4^n$ real square matrix $[\cN]$ whose
$\bm{i}$-th row and $\bm{j}$-th column element is defined via
\begin{align}\label{eq:PTM-channel}
  [\cN]_{\bm{i},\bm{j}} := \tr\left[P_{\bm{i}}\cN(P_{\bm{j}})\right].
\end{align}
Here, $\bm{i},\bm{j}\in\{0,1,2,3\}^{n}$.
Note that the identity channel $\id$ has identity Pauli transfer matrix, i.e., $[\id] = \1$,
as evident from the relation $\tr[P_{\bm{i}}P_{\bm{j}}] = \delta_{\bm{i},\bm{j}}$, where $\delta_{x,y}$ is the
Kronecker delta function.
By definition, if $\sigma = \cN(\rho)$, then $\vert\sigma\rrangle = [\cN]\vert\rho\rrangle$.

A desirable property of the Pauli transfer matrix representation (PTM) is that
quantum channel composition is multiplicative in this representation. Specifically,
let $\cM$ and $\cN$ be two $n$-qubit quantum channels, then
\begin{align}\label{eq:multiplicative}
    \left[\cM\circ\cN\right] = \left[\cM\right]\left[\cN\right].
\end{align}
The proof is given as follow.

\begin{proof}
Let $\bm{i},\bm{j}\in\{0,1,2,3\}^n$. By definition,
\begin{align}
  \left[\cM\circ\cN\right]_{\bm{i},\bm{j}}
&= \tr\left[P_{\bm{i}} \cM\circ\cN(P_{\bm{j}})\right] \\
&= \tr\left[P_{\bm{i}} \cM\left(\sum_{\bm{m}}[\cN]_{\bm{m},\bm{j}}P_{\bm{m}}\right)\right] \\
&= \tr\left[P_{\bm{i}} \sum_{\bm{m},\bm{n}}[\cN]_{\bm{m},\bm{j}}[\cM]_{\bm{n},\bm{m}} P_{\bm{n}}\right] \\
&= \tr\left[P_{\bm{i}} \sum_{n}\left(\sum_{\bm{m}}[\cM]_{\bm{n},\bm{m}}
                                            [\cN]_{\bm{m},\bm{j}}\right)P_{\bm{n}}\right] \\
&= \sum_{\bm{n}}\left([\cM][\cN]\right)_{\bm{n},\bm{j}}\tr\left[P_{\bm{i}}P_{\bm{n}}\right] \\
&= \left([\cM][\cN]\right)_{\bm{i},\bm{j}},
\end{align}
where in the last step we use the fact that $\tr[P_{\bm{i}}P_{\bm{j}}] = \delta_{\bm{i},\bm{j}}$.
\end{proof}

\begin{lemma}\label{eq:rho-bound}
Let $\rho$ be a $n$-qubit quantum state.
For arbitrary $\bm{i} \in\{0,1,2,3\}^n$, it holds that
$\vert \llangle P_{\bm{i}} \vert \rho \rrangle \vert \leq 1$.
\end{lemma}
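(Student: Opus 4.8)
The plan is to collapse the PTM pairing $\llangle P_{\bm{i}}\vert\rho\rrangle$ to an ordinary trace and then apply an elementary operator-norm estimate. First I would unfold the pairing: by orthonormality of the Pauli set, the $\bm{j}$-th entry of the row vector $\llangle P_{\bm{i}}\vert$ is $\tr[P_{\bm{i}}P_{\bm{j}}]=\delta_{\bm{i},\bm{j}}$, whereas the $\bm{j}$-th entry of $\vert\rho\rrangle$ is $\tr[P_{\bm{j}}\rho]$, so only the diagonal term survives and
\begin{align}
\llangle P_{\bm{i}}\vert\rho\rrangle = \tr[P_{\bm{i}}\rho].
\end{align}
(Equivalently, this is the $O=P_{\bm{i}}$ instance of $\llangle O\vert\rho\rrangle=\tr[O\rho]$, which follows from the Pauli decomposition of $O$.) It therefore remains to bound $\bigl\vert\tr[P_{\bm{i}}\rho]\bigr\vert$.

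For the bound I would use the trace duality $\bigl\vert\tr[AB]\bigr\vert\le\spectral{A}\,\norm{B}{1}$ (operator norm times trace norm), applied with $A=P_{\bm{i}}$ and $B=\rho$. Since $\rho$ is a density operator, $\norm{\rho}{1}=\tr[\rho]=1$, so $\bigl\vert\tr[P_{\bm{i}}\rho]\bigr\vert\le\spectral{P_{\bm{i}}}$. Finally I would evaluate $\spectral{P_{\bm{i}}}$ from the definition $P_{\bm{i}}=\bigotimes_{k=1}^n\sigma_{j_k}$: each normalized single-qubit Pauli in Eq.~\eqref{eq:Pauli-qubit} has eigenvalues $\pm 1/\sqrt{2}$, hence operator norm $1/\sqrt{2}$, and since the operator norm is multiplicative over tensor products, $\spectral{P_{\bm{i}}}=2^{-n/2}$. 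This gives $\bigl\vert\llangle P_{\bm{i}}\vert\rho\rrangle\bigr\vert\le 2^{-n/2}\le 1$, which is in fact slightly stronger than the claimed statement. A more hands-on alternative skips H\"older entirely: $P_{\bm{i}}$ is Hermitian with spectrum in $\{\pm 2^{-n/2}\}$, so $-2^{-n/2}\1\preceq P_{\bm{i}}\preceq 2^{-n/2}\1$, and tracing this operator inequality against $\rho\succeq 0$ with $\tr[\rho]=1$ yields the same conclusion.

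I do not anticipate a genuine obstacle here --- the lemma is essentially bookkeeping --- but two bits of care are worthwhile. One is stating explicitly that the PTM pairing equals the trace, which is immediate from $\tr[P_{\bm{i}}P_{\bm{j}}]=\delta_{\bm{i},\bm{j}}$ but should not be left implicit. The other is tracking the $1/\sqrt{2}$ normalization per qubit in Eq.~\eqref{eq:Pauli-qubit}: it is precisely this factor that sharpens the naive bound $\le 1$ to $\le 2^{-n/2}$, and omitting it would misleadingly suggest the estimate is tight.
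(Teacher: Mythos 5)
Your proof is correct and follows essentially the same route as the paper: both collapse $\llangle P_{\bm{i}}\vert\rho\rrangle$ to $\tr[P_{\bm{i}}\rho]$ and then bound this by the spectral norm of $P_{\bm{i}}$ times $\tr[\rho]=1$ (the paper's eigenbasis expansion of $\rho$ is just your H\"older estimate written out by hand). You are in fact more careful than the paper's own proof, which asserts that $P_{\bm{i}}$ has eigenvalues $\pm 1$ despite the $1/\sqrt{2}$-per-qubit normalization in Eq.~\eqref{eq:Pauli-qubit}; with that normalization the eigenvalues are $\pm 2^{-n/2}$, and your sharper bound $\vert\llangle P_{\bm{i}}\vert\rho\rrangle\vert\leq 2^{-n/2}$ is the correct tight statement, of which the lemma's $\leq 1$ is a weakening.
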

\begin{proof}
By definition, $\llangle P_{\bm{i}} \vert \rho \rrangle = \tr\left[P_{\bm{i}}\rho\right]$.
Notice that each Pauli operator $P_{\bm{i}}$ has eigenvalues $\pm 1$,
this implies that $\tr\left[P_{\bm{i}}\rho\right]$ is real.
Assume the eigenstates of $P_{\bm{i}}$ is $\{\ket{v_j}\}_j$.
We expand $\rho$ in this basis: $\rho = \sum_{jk}\rho_{kj}\ket{v_j}\!\bra{v_k}$.
Then
\begin{align}
    \vert\tr\left[P_{\bm{i}}\rho\right]\vert
=   \left\vert\sum_{jk}\rho_{kj}\bra{v_k}P_{\bm{i}}\ket{v_j}\right\vert
\leq \sum_{j}\rho_{jj}
= \tr[\rho] = 1,
\end{align}
where the inequality follows from the fact that $P_{\bm{i}}$ has eigenvalues $\pm 1$.
\end{proof}


\section{Proof of Theorem~\ref{thm:GEM}}\label{appx:GEM}

\begin{proof}[Proof of Theorem~\ref{thm:GEM}]
Using Proposition~\ref{prop:Neumann-expansion}, we have
\begin{align}
  \left\vert \tr[O\rho] - \sum_{k=1}^{K+1}c_K(k-1) E^{(k)}\right\vert
&=   \left\vert \llangle O\vert (I - [\cN])^{K+1} \vert\rho\rrangle\right\vert.
\end{align}
We need to upper bound the tail $\left\vert \llangle O\vert (I - [\cN])^{K+1} \vert\rho\rrangle\right\vert$.
Consider the following chain of inequalities:
\begin{align}
 \left\vert \llangle O\vert (I - [\cN])^{K+1} \vert\rho\rrangle\right\vert
&=   \left\vert\llangle O \left(\sum_{\bm{i}} \vert P_{\bm{i}}\rrangle\!\llangle P_{\bm{i}}\vert\right)
      (I - [\cN])^{K+1}
      \left(\sum_{\bm{j}} \vert P_{\bm{j}}\rrangle\!\llangle P_{\bm{j}}\vert\right)
      \vert\rho\rrangle \right\vert \\
&=   \left\vert \sum_{\bm{i},\bm{j}} \llangle O\vert P_{\bm{i}}\rrangle \llangle P_{\bm{j}}\vert\rho\rrangle
    \llangle P_{\bm{i}} \vert (I - [\cN])^{K+1} \vert P_{\bm{j}} \rrangle \right\vert \\
&\leq \sum_{\bm{i},\bm{j}}
      \vert \llangle O\vert P_{\bm{i}} \rrangle \vert
      \cdot
      \vert \llangle P_{\bm{j}}\vert\rho\rrangle \vert
      \cdot
      \vert \llangle P_{\bm{i}} \vert (I - [\cN])^{K+1} \vert P_{\bm{j}} \rrangle \vert \\
&\leq \sum_{\bm{i},\bm{j}}
      \vert \llangle O\vert P_{\bm{i}} \rrangle \vert
      \cdot
      \vert \llangle P_{\bm{i}} \vert (I - [\cN])^{K+1} \vert P_{\bm{j}} \rrangle \vert\label{eq:WIkgPiKRP1} \\
&= \sum_{\bm{i}}\vert \llangle O\vert P_{\bm{i}} \rrangle \vert
      \left(\sum_{\bm{j}} \vert \llangle P_{\bm{i}} \vert (I - [\cN])^{K+1} \vert P_{\bm{j}}\rrangle\vert\right) \\
&\leq \sum_{\bm{i}}\vert \llangle O\vert P_{\bm{i}} \rrangle \vert\norm{(I - [\cN])^{K+1}}{\infty}\label{eq:WIkgPiKRP2} \\
&= \norm{\llangle O\vert}{\infty}\norm{(I - [\cN])^{K+1}}{\infty} \\
&\leq \norm{\llangle O\vert}{\infty}\norm{I - [\cN]}{\infty}^{K+1},\label{eq:WIkgPiKRP3}
\end{align}
where Eq.~\eqref{eq:WIkgPiKRP1} follows from Lemma~\ref{eq:rho-bound},
Eq.~\eqref{eq:WIkgPiKRP2} follows from the definition of the matrix $\infty$-norm,
and Eq.~\eqref{eq:WIkgPiKRP3} follows from the submultiplicativity of the matrix $\infty$-norm.
\end{proof}

\section{Proof of Eq.~\eqref{eq:good-estimate}}\label{appx:good-estimate}

\begin{proof}
Let $\bm{s}^{m,k}\in\{0,1\}^n$ be the outcome in the $m$-th round when estimating $E_g^{(k)}$. By definition,
\begin{align}
\overline{\eta}
&= \sum_{k=1}^{K+1} c_K(k-1) \eta^{(k)} \\
&= \frac{1}{M}\sum_{k=1}^{K+1}\sum_{m=1}^{M}c_K(k-1) O(\bm{s}^{m,k}) \\
&= \frac{1}{M(K+1)}\sum_{k=1}^{K+1}\sum_{m=1}^{M}(K+1)c_K(k-1) O(\bm{s}^{m,k}).
\end{align}
Introducing the new random variables $X_{m,k}:= (K+1)c_K(k-1)O(\bm{s}^{m,k})$, we have
\begin{align}
\overline{\eta} = \frac{1}{M(K+1)}\sum_{k=1}^{K+1}\sum_{m=1}^{M} X_{m,k}.\label{eq:TgJvNAKG}
\end{align}
Intuitively, Eq.~\eqref{eq:TgJvNAKG} says that $\eta$ can be viewed as the empirical mean value
of the set of random variables
\begin{align}
    \left\{X_{m,k}: m=1,\cdots,M; k=1,\cdots,K+1\right\}.
\end{align}
First, we show that the absolute value of each $X_{m,k}$ is upper bounded as
\begin{align}\label{eq:mMZgBWZvH1}
  \vert X_{m,k}\vert = \vert (K+1)c_K(k-1)O(\bm{s}^{m,k}) \vert
  \leq (K+1) \vert c_K(k-1)\vert \vert O(\bm{s}^{m,k}) \vert
  \leq (K+1) \vert c_K(k-1)\vert,
\end{align}
where the second inequality follows from the assumption of $O$
(recall that $O$ is diagonal in the computational basis and $\norm{O}{2}\leq 1$).
Then, we show that $\overline{\eta}$ is an unbiased estimator of the quantity $\sum_{k=1}^{K+1}c_K(k-1)E^{(k)}$:
\begin{subequations}\label{eq:mMZgBWZvH2}
\begin{align}
  \bE[\overline{\eta}]
&= \bE\left[\frac{1}{M(K+1)}\sum_{k=1}^{K+1}\sum_{m=1}^{M} X_{m,k}\right] \\
&= \bE\left[\frac{1}{M}
   \sum_{k=1}^{K+1}\sum_{m=1}^{M}c_K(k-1)O(\bm{s}^{m,k})\right] \\
&= \sum_{k=1}^{K+1} c_K(k-1)\left(\sum_{\bm{x}}O(\bm{x})\langle \bm{x}\vert
      \bE_M\left[ \frac{1}{M} \sum_{m=1}^M \vert \bm{s}^{m,k}\rangle \right]\right) \\
&= \sum_{k=1}^{K+1} c_K(k-1)\left(
    \sum_{\bm{x}}O(\bm{x})\langle \bm{x}\vert A^k\opn{vec}(\rho)\right) \\
&= \sum_{k=1}^{K+1}c_K(k-1) E^{(k)},
\end{align}
\end{subequations}
where the last equality follows from~\eqref{eq:E_k}.
Eqs.~\eqref{eq:mMZgBWZvH1} and~\eqref{eq:mMZgBWZvH2} together guarantee that
the prerequisites of the Hoeffding's inequality hold. By the Hoeffding's equality, we have
\begin{align}
    \Pr\left\{\left\vert\overline{\eta} - \sum_{k=1}^{K+1}c_K(k-1) E^{(k)} \right\vert \geq \varepsilon\right\}
&\leq 2\exp\left(- \frac{2M^2(K+1)^2\varepsilon^2}{ 4\sum_{k=1}^{K+1}\sum_{m=1}^M((K+1)c_K(k))^2}\right) \\
&= 2\exp\left(- \frac{2M^2(K+1)^2\varepsilon^2}{ 4M(K+1)^3\left(\sum_{k=0}^K [c_K(k)]^2\right)}\right) \\
&= 2\exp\left(- \frac{M\varepsilon^2}{ 2(K+1)\Delta}\right),
\end{align}
where $\Delta:=\sum_{k=1}^{K+1} [c_K(k)]^2 = \binom{2K+2}{K+1} - 1$.
Solving
\begin{align}
    2\exp\left(- \frac{M\varepsilon^2}{ 2(K+1)\Delta}\right) \leq \delta
\end{align}
gives
\begin{align}
    M \geq 2(K+1)\Delta\log(2/\delta)/\varepsilon^2.
\end{align}
To summarize, choosing $K =\left\lceil \log\varepsilon/\log\xi - 1\right\rceil$
and $M =\lceil 2(K+1)\Delta\log(2/\delta)/\varepsilon^2 \rceil$, we are able obtain the following
two statements
\begin{align}
    \Pr\left\{\left\vert\overline{\eta} - \sum_{k=1}^{K+1}c_K(k-1) E^{(k)} \right\vert \geq \varepsilon\right\} \leq \delta, \\
    \left\vert \tr[O\rho] - \sum_{k=1}^{K+1}c_K(k-1) E^{(k)}\right\vert \leq \varepsilon,
\end{align}
where the first one is shown above and the second one is proved in Theorem~\ref{thm:GEM}.
Using the union bound and the triangle inequality, we conclude that $\overline{\eta}$
can estimate the ideal expectation value $\tr[O\rho]$ with error $2\varepsilon$ at a probability
greater than $1-\delta$.
\end{proof}

\section{Proof of Theorem~\ref{thm:MEM}}\label{appx:MEM}

\begin{proof}
Using Proposition~\ref{prop:Neumann-expansion}, we have
\begin{align}
  \left\vert \tr[O\rho] - \sum_{k=1}^{K+1}c_K(k-1) E^{(k)}\right\vert
= \left\vert \sum_{\bm{x}\in\{0,1\}^n}O(\bm{x})\langle \bm{x}\vert(I-A)^{K+1}\opn{vec}(\rho)\right\vert.
    \label{eq:appx:MEM-2}
\end{align}
Now we show that the quantity in~\eqref{eq:appx:MEM-2} can be bounded from above.
Define the matrix $1$-norm of a $m\times n$ matrix $B$ as
\begin{align}
   \norm{B}{1} := \max_{1\leq j \leq n}\sum_{i=1}^n\vert B_{ij}\vert
                \equiv \max_{1\leq j \leq n} \sum_{i=1}^n \vert \bra{i} B \ket{j}\vert,
\end{align}
which is simply the maximum absolute column sum of the matrix.
Let $\rho(\bm{y})$ is the $\bm{y}$-th diagonal element of the quantum state $\rho$.
Consider the following chain of inequalities:
\begin{subequations}\label{eq:bounding-tail}
\begin{align}
\left\vert \sum_{\bm{x}\in\{0,1\}^n}O(\bm{x})\langle \bm{x}\vert(I-A)^{K+1}\opn{vec}(\rho)\right\vert
&= \left\vert\sum_{\bm{x}\in\{0,1\}^n}\sum_{\bm{y}\in\{0,1\}^n}O(\bm{x})
    \rho(\bm{y})\langle \bm{x}\vert (I-A)^{K+1}\vert\bm{y}\rangle\right\vert\\
&\leq\sum_{\bm{x}\in\{0,1\}^n}\sum_{\bm{y}\in\{0,1\}^n}
    \vert O(\bm{x})\vert \cdot
    \rho(\bm{y}) \cdot \left\vert\langle \bm{x}\vert(I-A)^{K+1}\vert\bm{y}\rangle\right\vert\\
&\leq\sum_{\bm{x}\in\{0,1\}^n}\sum_{\bm{y}\in\{0,1\}^n}
    \rho(\bm{y})\left\vert\langle \bm{x}\vert(I-A)^{K+1}\vert\bm{y}\rangle\right\vert\label{eq:appx:MEM-3}\\
&=\sum_{\bm{y}\in\{0,1\}^n}\rho(\bm{y})
  \sum_{\bm{x}\in\{0,1\}^n}\left\vert\langle \bm{x}\vert(I-A)^{K+1}\vert\bm{y}\rangle\right\vert \\
&\leq\sum_{\bm{y}\in\{0,1\}^n} \rho(\bm{y}) \lVert (I-A)^{K+1} \lVert_1\label{eq:appx:MEM-4} \\
&= \lVert (I-A)^{K+1} \lVert_1 \label{eq:appx:MEM-5} \\
&\leq \lVert I-A \lVert_1^{K+1}\label{eq:appx:MEM-6} \\
&= \xi_m^{K+1},\label{eq:appx:MEM-7}
\end{align}
\end{subequations}
where~\eqref{eq:appx:MEM-3} follows from the assumption
that $O$ is diagonalized in the computational basis and $\norm{O}{2} \leq 1$,
~\eqref{eq:appx:MEM-4} follows from the definition of matrix $1$-norm,
~\eqref{eq:appx:MEM-5} follows from the fact that $\rho$ is a quantum state
and thus $\sum_{\bm{y}}\rho(\bm{y})=1$,
~\eqref{eq:appx:MEM-6} follows from the submultiplicativity property of the matrix $1$-norm,
and~\eqref{eq:appx:MEM-7} follows from Lemma~\ref{lemma:min-norm} stated below.
We are done.
\end{proof}

\begin{lemma}\label{lemma:min-norm}
Let $A$ be a column stochastic matrix of size $d \times d$. It holds that
\begin{align}
    \xi_m(A) = \norm{I - A}{1},
\end{align}
where $\xi_m(A)$ is defined in~\eqref{eq:noise resistance mem}.
\end{lemma}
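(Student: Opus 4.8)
The plan is to compute the matrix $1$-norm of $B := I - A$ column by column, exploiting the two defining features of a column stochastic matrix: every entry is nonnegative, and every column sums to $1$. Fix a column index $j \in \{1,\dots,d\}$ and split the absolute column sum $\sum_{i=1}^d |B_{ij}|$ into the diagonal term $|B_{jj}|$ and the off-diagonal terms $\sum_{i \neq j} |B_{ij}|$.

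For the diagonal term, note $B_{jj} = 1 - A_{jj}$. Since $A$ is column stochastic and nonnegative, $0 \le A_{jj} \le \sum_i A_{ij} = 1$, so $B_{jj} \ge 0$ and hence $|B_{jj}| = 1 - A_{jj}$. For the off-diagonal terms, $B_{ij} = -A_{ij}$ with $A_{ij} \ge 0$, so $|B_{ij}| = A_{ij}$ and therefore $\sum_{i \neq j} |B_{ij}| = \sum_{i \neq j} A_{ij} = 1 - A_{jj}$, again using that column $j$ of $A$ sums to $1$. Adding the two contributions gives
\begin{align}
  \sum_{i=1}^d |B_{ij}| = (1 - A_{jj}) + (1 - A_{jj}) = 2\bigl(1 - A_{jj}\bigr).
\end{align}

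Finally I would take the maximum over $j$: since $x \mapsto 2(1-x)$ is decreasing,
\begin{align}
  \norm{I - A}{1} = \max_{1 \le j \le d} 2\bigl(1 - A_{jj}\bigr) = 2\Bigl(1 - \min_{1 \le j \le d} A_{jj}\Bigr) = 2\Bigl(1 - \min_{\bm{x}\in\{0,1\}^n}\bra{\bm{x}}A\ket{\bm{x}}\Bigr) = \xi_m(A),
\end{align}
which is exactly the claim. There is no real obstacle here; the only point requiring a moment of care is justifying $A_{jj} \le 1$ so that the diagonal entry of $I-A$ is nonnegative and its absolute value can be dropped without a case distinction — and this is immediate from nonnegativity together with the column-sum condition.
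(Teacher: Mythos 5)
Your proof is correct and follows essentially the same route as the paper's: split each absolute column sum of $I-A$ into the diagonal part $1-A_{jj}$ and the off-diagonal part $\sum_{i\neq j}A_{ij}=1-A_{jj}$, then take the maximum over columns. If anything, you are slightly more explicit than the paper in justifying $A_{jj}\leq 1$ so that the diagonal entry of $I-A$ is nonnegative.
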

\begin{proof}
Since $A$ is column stochastic, $I-A$ has non-negative diagonal elements and negative off-negative elements.
Thus
\begin{align}
  \norm{I - A}{1}
&= \max_{1\leq j\leq d}\left(1 - A_{jj} + \sum_{i\neq j} A_{ij}\right) \\
&= \max_{1\leq j\leq d}\left(1 - A_{jj} + 1 - A_{jj}\right) \\
&= 2\max_{1\leq j\leq d}\left(1 - A_{jj}\right) \\
&= 2 - 2\min_{1\leq j\leq d} A_{jj} \\
&=: \xi_m(A),
\end{align}
where the second equality follows from the fact that $A$ is column stochastic.
\end{proof}

\section{Sequential measurements}\label{appx:sequential measurements}

In this Appendix, we prove that the classical noise model describing the
sequential measurement repeating $k$ times is effectively characterized by the
stochastic matrix $A^k$. We begin with the simple case $k=2$.
Since the noise model is classical and linear in the input,
it suffices to consider the computational basis states as inputs.
As shown in Fig.~\ref{fig:sequential-measurement},
we apply the noisy quantum measurement device two times sequentially
on the input state $\proj{\bm{x}}$ in computational basis
where $\bm{x}\in\{0,1\}^{n}$.
Assume the measurement outcome of the first measurement is $\bm{y}$ and
the measurement outcome of the second measurement is $\bm{z}$,
where $\bm{y},\bm{z}\in\{0,1\}^{n}$.
Assume that the error matrix associated with this sequential measurement is
$A'$. That is, the probability of obtaining the outcome $\bm{z}$ provided
the true outcome is $\bm{x}$ is given by $A'_{\bm{z}\bm{x}}$.
Practically, we input $\proj{\bm{x}}$ to the first noisy measurement device
and obtain the outcome $\bm{y}$. The probability of this event
is $A_{\bm{y}\bm{x}}$, by the definition of the error matrix.
Similarly, we input $\proj{\bm{y}}$ to the second noisy measurement device
and obtain the outcome $\bm{z}$. The probability of this event
is $A_{\bm{z}\bm{y}}$. Inspecting the chain $\bm{x}\to\bm{y}\to\bm{z}$, we have
\begin{align}
    A'_{\bm{z}\bm{x}}
= \sum_{\bm{y}\in\{0,1\}^{n}} A_{\bm{y}\bm{x}}A_{\bm{z}\bm{y}}
= A^2_{\bm{z}\bm{x}}.
\end{align}
The above analysis justifies that the classical noise model describing the sequential
measurement repeating $2$ times is effectively characterized by the
stochastic matrix $A^2$. The general case can be analyzed similarly.

\begin{figure}[!htbp]
  \centering
  \includegraphics[width=0.3\textwidth]{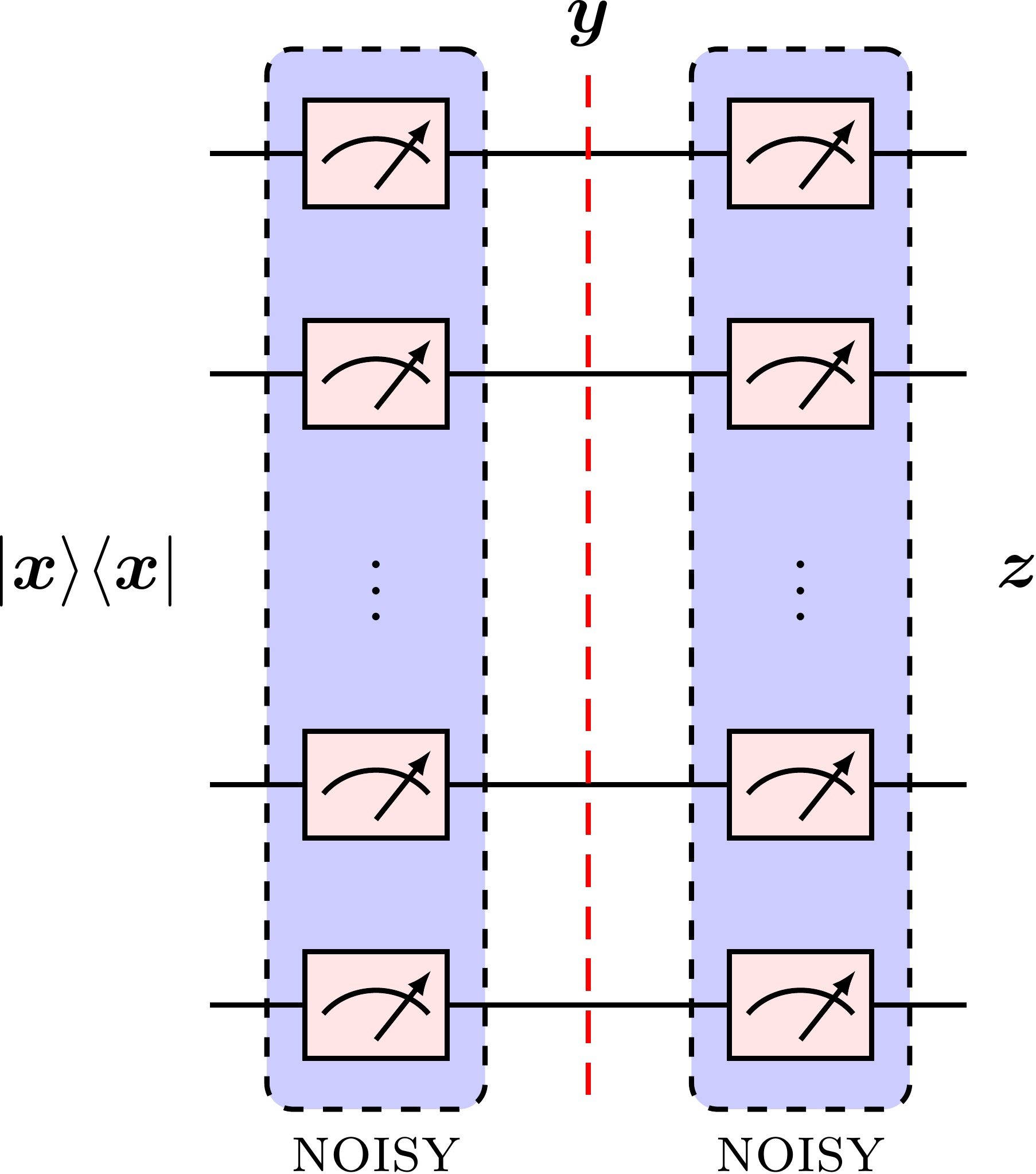}
  \caption{Apply the noisy quantum measurement device two times sequentially
        on the input state $\proj{\bm{x}}$ where $\bm{x}\in\{0,1\}^{n}$.
        The measurement outcome of the first measurement is $\bm{y}$ and
        the measurement outcome of the second measurement is $\bm{z}$.}
  \label{fig:sequential-measurement}
\end{figure}

Mathematically, quantum measurements can be modeled as
quantum-classical quantum channels~\cite[Chapter 4.6.6]{wilde2016quantum}
where they take a quantum system to a classical one.
Experimentally, the implementation of quantum measurement
is platform-dependent and has different characterizations.
For example, the fabrication and control of quantum coherent superconducting circuits
have enabled experiments that implement quantum measurement~\cite{naghiloo2019introduction}.
Based on the outcome data, experimental measurements are typically categorized
into two types: those only output classical outcomes
and those output both classical outcomes and quantum states.
That is, besides the usually classical outcome sequences,
the measurement device will also output a quantum state on the computational basis
corresponding to the classical outcome.
For the former type, we can implement the sequential measurement
via the \emph{qubit reset}~\cite{egger2018pulsed,magnard2018fast,yirka2020qubit} approach,
by which we mean the ability to re-initialize the qubits into a known state,
usually a state in the computational basis, during the course of the computation.
Technically, when the $i$-th noisy measurement outputs an outcome
sequence $\bm{s}^i\in\{0,1\}^n$, we use the qubit reset technique to prepare
the computational basis state $\vert\bm{s}^i\rangle\!\langle\bm{s}^i\vert$ and feed it to
the $(i+1)$-th noisy measurement (cf. Fig.~\ref{fig:sequential-measurement}).
In this case, the noisy measurement device can be reused.
For the latter type, the sequential measurement can be implemented efficiently:
when the $i$-th noisy measurement outputs a classical sequence and a quantum state
on the computational basis, we feed the quantum state to the $(i+1)$-th noisy measurement.

\section{Demonstrative examples}\label{appx:examples}

\textbf{Dephasing noise.}
The qubit dephasing error can arise when the energy splitting of a qubit fluctuates
as a function of time due to coupling to the environment. Charge noise affecting
a transmon qubit is of this type. Dephasing is represented by a phase-flip channel,
which describes the loss of phase information with probability $p$. This
channel projects the state onto the $Z$-axis of the Bloch sphere with probability $p$,
and does nothing with probability $1-p$:
\begin{align}\label{eq:dephasing-channel}
   \cD_p(\rho) := \left(1 - p\right)\rho + pZ \rho Z,
\end{align}
where $p\in[0,1]$. The PTM representation of $\cD_p$ is~\cite{greenbaum2015introduction}
\begin{align}
   [\cD_p] =
\begin{pmatrix}
1 & 0 & 0 & 0 \\
0 & 1-2p & 0 & 0 \\
0 & 0 & 1-2p & 0 \\
0 & 0 & 0 & 1
\end{pmatrix}
\end{align}
and thus $\xi(\cD_p) = 2p$. The noise resistance implies that our method fails
whenever $p\geq1/2$, for which the truncation error cannot be bounded any more.

\textbf{Amplitude damping noise.} The amplitude damping channel is defined as
\begin{align}\label{eq:AD}
  \cA_\gamma(\rho) := E_1 \rho E_1^\dag + E_2 \rho E_2^\dag,
\end{align}
where $\gamma\in[0,1]$, $E_1:=\proj{0} + \sqrt{1-\gamma}\proj{1}$, and $E_2:=\sqrt{\gamma}\ketbra{0}{1}$.
$\gamma$ can be thought of as the probability of losing a photon.
The PTM representation of $\cA_\gamma$ is~\cite{greenbaum2015introduction}
\begin{align}\label{eq:AD-PTM}
   [\cA_\gamma] =
\begin{pmatrix}
1 & 0 & 0 & 0 \\
0 & \sqrt{1-\gamma} & 0 & 0 \\
0 & 0 & \sqrt{1-\gamma} & 0 \\
\gamma & 0 & 0 & 1-\gamma
\end{pmatrix}
\end{align}
and thus $\xi(\cA_\gamma) = 2\gamma$. Likewise,
our method fails for the amplitude damping channel
whose noise parameter $\gamma$ exceeds $1/2$, for which the truncation error cannot be bounded.

\end{document}